\documentclass[11pt]{article}
\usepackage{epsfig,amsmath,amsthm,latexsym,graphicx,amsfonts,amssymb,rotating}
\usepackage{rotating}
\addtolength{\textwidth}{30mm} \addtolength{\oddsidemargin}{-15mm}
\addtolength{\textheight}{40mm} \addtolength{\topmargin}{-15mm}
\setlength{\parskip}{5mm} \setlength{\parindent}{0mm} \sloppy
\newcommand{\beq}{\begin{equation}}
\newcommand{\eeq}{\end{equation}}

\newcommand{\R}{\mathbb R}

\newcommand{\E}{\mathbb E}
\newcommand{\BH}{\mathbb H}

\newcommand{\gammafn}{\operatorname{\Gamma}}
\newcommand{\grad}{{\nabla}}

\newcommand{\var}{\operatorname{\rm Var}}

\newcommand{\col}{\operatorname{\rm col}}
\newcommand{\uh}{\phi}
\newcommand{\uhspace}{\Phi}

\newcommand{\htheta}{\hat{\theta}}
\newcommand{\hq}{\hat{q}}
\newcommand{\horomap}{H}
\newcommand{\horomapx}{\horomap_\Xi}
\newcommand{\horomapu}{\horomap_\uhspace}
\newcommand{\fxt}{f_{\Xi\Theta}}
\newcommand{\fxu}{f_{\Xi\uhspace}}
\newtheorem{theorem}{Theorem}

\newtheorem{lemma}[theorem]{Lemma}

\newtheorem{defn}{Definition}
\newtheorem{remark}{Remark}
\def\defeq{\stackrel{\mbox{\tiny{def}}}{=}}
\begin{document}

\title{SMML estimators for linear regression \\ and tessellations of hyperbolic space}
\author{James G. Dowty}

\maketitle

\abstract{
The strict minimum message length (SMML) principle links data compression with inductive inference.  The corresponding estimators have many useful properties but they can be hard to calculate.  We investigate SMML estimators for linear regression models and we show that they have close connections to hyperbolic geometry.  When equipped with the Fisher information metric, the linear regression model with $p$ covariates and a sample size of $n$ becomes a Riemannian manifold, and we show that this is isometric to $(p+1)$-dimensional hyperbolic space $\mathbb{H}^{p+1}$ equipped with a metric tensor which is $2n$ times the usual metric tensor on $\mathbb{H}^{p+1}$.  A natural identification then allows us to also view the set of sufficient statistics for the linear regression model as a hyperbolic space.  We show that the partition of an SMML estimator corresponds to a tessellation of this hyperbolic space.
}

\section{The linear regression model}

To establish our notation we briefly recall some details of the linear regression model.

The linear regression model is a statistical model for observed data $y \in \R^n$ (thought of as a column matrix) which is a realization of an $n$-dimensional, normally-distributed random variable $Y$ with mean $A \beta$ and variance-covariance matrix $\sigma^2 I_n$, i.e.,
$$Y \sim N_n(A \beta, \sigma^2 I_n),$$
where $A$ is a full-rank $n \times p$ matrix called the design matrix, $\beta \in \R^p$ is a column matrix, $\sigma >0$ and $I_n$ is the $n \times n$ identity matrix.  Here $\beta$ and $\sigma$ are unknown and are to be estimated in terms of $y$ and $A$.  In this paper, we will always require $p \le n$ though for certain results (indicated in the text) we will also require $p < n$.  The probability density function (PDF) of $Y$ given values of the unknown model parameters $\beta$ and $\sigma$ is therefore
\begin{eqnarray}
(2 \pi \sigma^2)^{-n/2} \exp\left( -\frac{\| y - A \beta \|^2}{2 \sigma^2} \right)
\label{E:PDFy1}
\end{eqnarray}
where $\| \cdot \|$ is the Euclidean norm on $\R^n$.

It is well-known that this statistical model is an exponential family, so we will now write (\ref{E:PDFy1}) in canonical form.  Let $B$ be any $n \times p$ matrix whose columns form an orthonormal basis for the column space $\col A$ of $A$, e.g. we could take $B = A (A^T A)^{ -\frac{1}{2}}$.  Then $B^T B = I_p$ and the orthogonal projection of $\R^n$ onto $\col A$ is $A (A^T A)^{-1} A^T = B B^T$.  Define the {\em sufficient statistics} $T(y)$ and {\em natural parameters} $\theta$ of the exponential family to be
\begin{equation}\label{E:suffstat}
T(y) \defeq \left[ \begin{array}{cc} B^T y \\
                                  \| y \|^2  \end{array} \right]
\mbox{ and } \theta \defeq \frac{1}{\sigma^2} \left[ \begin{array}{cc} B^T A \beta \\
-\frac{1}{2}  \end{array} \right].
\end{equation}
Then the PDF (\ref{E:PDFy1}) can be written in the canonical form
\begin{eqnarray}
p_Y(y | \theta) = \exp(\theta \cdot T(y)) h_Y(y)/Z(\theta)
\label{E:PDFy2}
\end{eqnarray}
where the dot denotes the Euclidean inner product, $h_Y(y) = (2\pi)^{-n/2}$ and the {\em partition function} $Z(\theta)$ is
\beq
Z(\theta) \defeq \exp \left(-\frac{n}{2} \log(-2 \theta_{p+1}) - \frac{\theta_1^2 + \ldots + \theta_p^2}{4 \theta_{p+1}} \right).
\label{E:logpartfn}
\eeq
Note from (\ref{E:suffstat}) that the natural parameter space $\Theta$, which is the set of all natural parameters, is
\beq \label{E:Theta}
\Theta = \{ \theta \in \R^{p+1} \mid  \theta_{p+1} < 0 \}.
\eeq

\begin{remark}
The first $p$ sufficient statistics $B^T y$ are essentially equal to the orthogonal projection of $y$ onto $\col A$.  More precisely, since $B^T y = B^T (B B^T y)$ and $B B^T$ is orthogonal projection, the first $p$ sufficient statistics are the orthogonal projection of $y$ onto $\col A$ written in terms of the co-ordinates for $\col A$ corresponding to the basis formed by the columns of $B$.  The reason for using this definition, instead of simply taking the orthogonal projection of $y$ onto $\col A$, is that we require the set of all possible sufficient statistics to form an open set in $\R^d$ for some $d$, while $\col A$ is a lower-dimensional set in $\R^n$.
\end{remark}

\begin{remark}
In this paper, we will think of $\Theta$ as simply being a subset of a generic $(p+1)$-dimensional vector space $\R^{p+1}$.  However, for a number of reasons, it is more natural to think of $\Theta$ as a subset of the dual space
to the vector space
containing the set $\mathcal{X}$ of all possible sufficient statistics.  One reason this is natural is that the dot in (\ref{E:PDFy2}) then becomes the natural pairing between a vector space and its dual, rather than the (non-canonical) Euclidean dot product.  Another reason is that the Fisher information matrices on $\Theta$ and $\mathcal{X}$ (when $\mathcal{X}$ is identified with the expectation parameter space, see Section \ref{SS:expparam}) are matrix inverses of each other, as is the case for a metric on a vector space and the induced metric on the dual vector space.  A third reason is that there is a close connection between exponential families and convex conjugation \cite[Chapter 9]{barndorff-nielsen} which makes it natural to think of $\Theta$ and $\mathcal{X}$ as convex subsets of dual vector spaces.  This connection
can be used to show (under mild conditions) that the maximum likelihood estimator is the gradient of the maximized log-likelihood function, and that the maximized log-likelihood function can itself be calculated as the convex conjugate of the log-partition function \cite[Theorem 9.13]{barndorff-nielsen}.
\end{remark}

\section{The linear regression model is isometric to $2n \BH^{p+1}$}
\label{S:hyp}

When equipped with the Fisher information metric, the parameter space for the linear regression model above, with $p$ covariates and a sample size of $n$, is a Riemannian manifold.  In this section, we will show that this is isometric to the Riemannian manifold $2n \BH^{p+1}$, which we define to be $(p+1)$-dimensional hyperbolic space $\BH^{p+1}$ (with all sectional curvatures equal to $-1$) equipped with a metric tensor which is $2n$ times the usual metric tensor on $\BH^{p+1}$.  This result contradicts certain findings of \cite{kass_vos} and \cite{costa} when $n \not= 1$, but we will show that the formulae of \cite{kass_vos} and \cite{costa} are not correct.

Recall that if an open set $U \subseteq \R^k$ parameterises a stochastic model then the {\em Fisher information metric} of this model is represented, in the local co-ordinates of this parameterisation, by the {\em Fisher information matrix} $g_U$. Under regularity conditions satisfied by all models considered in this paper, $g_U$ is given by either of the following expressions
\beq
g_U = \E[(\nabla \ell)(\nabla \ell)^T] = - \E[\mbox{Hess}(\ell)]
\label{E:defFI}
\eeq
where $\ell:U \to \R$ is the log-likelihood function, $\nabla \ell$ is its gradient (interpreted as a column matrix in the formula above), $\mbox{Hess}(\ell)$ is its Hessian matrix and the expectation is taken over the observed data.

\subsection{The upper half-space parameterisation}
\label{SS:upperhalfspace}

We now define a parameterisation for the linear regression model and calculate its corresponding Fisher information matrix. Let
\beq \label{E:uhparam}
\uh \defeq \left[ \begin{array}{c} B^T A \beta \\ \sigma \sqrt{2n}  \end{array} \right]
\eeq
and note that, up to a linear transformation, this is just the $\beta, \sigma$ parameterisation.  The set of possible values for $\uh$ is the upper half-space $\uhspace \defeq \{ \uh \in \R^{p+1} \mid \uh_{p+1} > 0 \}$ so, in light of this and Theorem \ref{T:upperhalfspace} below, we will refer to this as the {\em upper half-space parameterisation}.

The upper half-space model for hyperbolic space is a Riemannian manifold with a metric tensor which is a particular multiple of the identity, as given in \cite[Theorem 4.6.6]{ratcliffe}, and all sectional curvatures equal to $-1$.

\begin{theorem}
\label{T:upperhalfspace}
The Fisher information matrix for the upper half-space parameterisation is
$$g_\uhspace = 2n \uh_{p+1}^{-2} I_{p+1}$$
where $I_{p+1}$ is the $(p+1) \times (p+1)$ identity matrix.  So $\uhspace$ is the upper half-space model for $(p+1)$-dimensional hyperbolic space but with a metric tensor that is $2n$ times the usual metric tensor.
\end{theorem}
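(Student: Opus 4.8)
The plan is to compute the Fisher information matrix directly from the second formula in (\ref{E:defFI}), namely $g_\uhspace = -\E[\mbox{Hess}(\ell)]$, working in the $\uh$ coordinates. First I would write down the log-likelihood $\ell$ as a function of $\uh$. From (\ref{E:PDFy1}), $\ell = -\frac{n}{2}\log(2\pi\sigma^2) - \frac{1}{2\sigma^2}\|y - A\beta\|^2$. The key simplification is to decompose $\|y - A\beta\|^2$ using the orthogonal projection $BB^T$ onto $\col A$: since $A\beta \in \col A$, we have $\|y - A\beta\|^2 = \|BB^T y - A\beta\|^2 + \|(I_n - BB^T)y\|^2$, and moreover $BB^T A\beta = A\beta$ with $\|BB^T y - A\beta\|^2 = \|B^T y - B^T A\beta\|^2$ because $B^T B = I_p$. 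Writing $\mu \defeq B^T A \beta$ (the first $p$ components of $\uh$) and $t \defeq \uh_{p+1} = \sigma\sqrt{2n}$ (so $\sigma^2 = t^2/(2n)$), the log-likelihood becomes, up to a constant not involving $\uh$,
\beq
\ell = -\frac{n}{2}\log\!\left(\frac{2\pi t^2}{2n}\right) - \frac{n}{t^2}\left( \|B^T y - \mu\|^2 + \|(I_n - BB^T)y\|^2 \right).
\eeq

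Next I would differentiate. With respect to the first $p$ coordinates $\mu$, the gradient is $\frac{2n}{t^2}(B^T y - \mu)$ and the $\mu\mu$-block of the Hessian is $-\frac{2n}{t^2} I_p$, which is already constant in $y$, so its negative expectation gives $\frac{2n}{t^2} I_p = 2n\,\uh_{p+1}^{-2} I_p$. The $t$-derivative gives $\partial\ell/\partial t = -n/t + \frac{2n}{t^3}\|y - A\beta\|^2$ (reassembling the squared norms), so $\partial^2\ell/\partial t^2 = n/t^2 - \frac{6n}{t^4}\|y-A\beta\|^2$; taking $-\E$ and using $\E\|y - A\beta\|^2 = n\sigma^2 = t^2/2$ yields $-n/t^2 + \frac{6n}{t^4}\cdot\frac{t^2}{2} = 2n/t^2$. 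Finally the cross-derivatives $\partial^2\ell/\partial\mu\,\partial t$ are proportional to $B^T y - \mu$, whose expectation is zero, so the off-diagonal blocks vanish. Assembling the blocks gives $g_\uhspace = 2n\,\uh_{p+1}^{-2} I_{p+1}$. The identification with the upper half-space model of $\BH^{p+1}$ (with metric scaled by $2n$) is then immediate by comparing with the standard metric tensor $\uh_{p+1}^{-2} I_{p+1}$ cited from \cite[Theorem 4.6.6]{ratcliffe}.

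I expect the only real subtlety — rather than a genuine obstacle — to be bookkeeping: one must be careful that $\uh$ is obtained from $(\beta,\sigma)$ by an affine-linear change of coordinates that is \emph{not} invertible as a map on all of $\R^p$ unless $B^T A$ is square, so strictly speaking $\uh$ parameterises the model only through $(\mu,\sigma) = (B^T A\beta, \sigma)$, and it is this $(\mu, t)$ coordinate system on which the computation above is literally carried out; since $B^T A$ has full rank $p$, the map $\beta \mapsto \mu$ is a submersion onto $\R^p$ and the Fisher metric in the $\uh$-coordinates is well-defined. One should also note where $p \le n$ versus $p < n$ matters: the term $\|(I_n - BB^T)y\|^2$ is identically zero when $p = n$, but this does not affect any of the derivatives with respect to $\uh$, so the formula holds for all $p \le n$. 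A secondary check worth including is that the alternative formula $g_\uhspace = \E[(\nabla\ell)(\nabla\ell)^T]$ gives the same answer, which follows from $\var(B^T y) = \sigma^2 I_p$ and $\var(\|y - A\beta\|^2) = 2n\sigma^4$ together with the vanishing of the relevant covariance.
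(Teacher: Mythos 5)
Your proposal is correct and follows essentially the same route as the paper: compute the log-likelihood in the $\uh$-coordinates, take second partials, and use $\E[y]=A\beta$ and $\E\|y-A\beta\|^2=n\sigma^2=\uh_{p+1}^2/2$ to evaluate $-\E[\mathrm{Hess}(\ell)]$; your extra decomposition of $\|y-A\beta\|^2$ into the $\col A$ and orthogonal components is a harmless cosmetic variation (the paper just keeps $\|y-B\uh_{[1:p]}\|^2$). The ``subtlety'' you flag is vacuous: $B^TA$ is always a square $p\times p$ matrix of full rank, so $\beta\mapsto B^TA\beta$ is a linear isomorphism and $\uh$ is a genuine reparameterisation.
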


\begin{proof}
We first note that $A \beta = B \uh_{[1:p]}$, where $\uh_{[1:p]} = B^T A \beta$ is the $p \times 1$ column matrix whose entries are the first $p$ entries of $\uh$.  This follows because $B B^T$ is the identity on $\col A$ (being the orthogonal projection onto $\col A$) and $A \beta \in \col A$ so $A \beta = B B^T A \beta = B \uh_{[1:p]}$.  So by (\ref{E:PDFy1}), the log-likelihood function for this parameterisation is
$$\ell_\uhspace(\uh) = -\frac{n}{2} \log (\pi/n) - n \log \uh_{p+1} -n \uh_{p+1}^{-2} \| y - B \uh_{[1:p]} \|^2.$$
For $i,j = 1, \ldots, p$ we therefore have
$$\frac{\partial \ell_\uhspace}{\partial \uh_i} = 2n \uh_{p+1}^{-2} ( y - B \uh_{[1:p]})^T B e_i,$$
where $e_i$ is the $i^{th}$ standard basis vector for $\R^p$, and
$$\frac{\partial \ell_\uhspace}{\partial \uh_{p+1}} = - n \uh_{p+1}^{-1} + 2n \uh_{p+1}^{-3} \| y - B \uh_{[1:p]} \|^2.$$
So letting $\delta_{ij}$ be the Kronecker delta,
$$\frac{\partial^2 \ell_\uhspace}{\partial \uh_i \partial \uh_j}
= -2n \uh_{p+1}^{-2} \delta_{ij},$$
$$\frac{\partial^2 \ell_\uhspace}{\partial \uh_i \partial \uh_{p+1}} = -4n \uh_{p+1}^{-3} ( y - B \uh_{[1:p]})^T B e_i$$
and
$$\frac{\partial^2 \ell_\uhspace}{\partial \uh_{p+1}^2}
=  n \uh_{p+1}^{-2} - 6n \uh_{p+1}^{-4} \| y - B \uh_{[1:p]} \|^2.$$
Taking expectations of the negatives of these second partial derivatives and using the facts $\E[y] = A \beta = B \uh_{[1:p]}$ and
$$ \E\| y - B \uh_{[1:p]} \|^2 = \sum_{i=1}^n \E[(y_i - \E[y_i])^2] = n \sigma^2 = \frac{\uh_{p+1}^2}{2}$$
then proves $g_\uhspace = 2n \uh_{p+1}^{-2} I_{p+1}$.  Comparing this with \cite[Theorem 4.6.6]{ratcliffe} then proves the theorem.
\end{proof}

\subsection{Sectional curvatures of the linear regression model}
\label{SS:sectcurvatures}

Theorem \ref{T:upperhalfspace} allows us to see that the linear regression parameter space $\uhspace$ is a Riemannian manifold with all sectional curvatures equal to $-1/2n$.  For if $\lambda > 0$ and $(M,g)$ is a Riemannian manifold, where $M$ is a smooth manifold and $g$ is a metric tensor, then the sectional curvatures of $(M,g)$ are $\lambda^{-1}$ times the corresponding section curvatures of the Riemannian manifold $(M,\lambda g)$.
(This is elementary to prove from the relevant definitions, but as a check that the correct power of $\lambda$ here is $-1$, apply this formula to the case when $(M,g)$ is the unit $2$-sphere:  for then $(M,\lambda g)$ is isometric to the $2$-sphere with radius $\sqrt{\lambda}$ and this has all sectional curvatures equal to $\lambda^{-1}$.)
Combining this scaling result with Theorem \ref{T:upperhalfspace} and the fact that the sectional curvatures of the upper half-space model are all equal to $-1$ then proves that $\uhspace$ has all sectional curvatures equal to $-1/2n$.

\subsection{The spherical normal model}

The linear regression model can be viewed as a sub-model of the $n$-dimensional spherical normal model $y \sim N_n(\mu, \sigma^2 I_n)$ with unknown $\sigma$.
On the other hand, the spherical normal model is the special case of the linear regression model where $p=n$ and $A = B = I_n$.  Our finding from Section \ref{SS:sectcurvatures} that all linear regression models with $n$ observations and $p$ covariates have sectional curvatures of $-1/2n$ therefore contradicts Kass and Vos
\cite[\S 7.4.3]{kass_vos} when $n \not= 1$, since they report that the sectional curvatures for the model
$y \sim N_n(\beta, \sigma^2 I_n)$ are all $-1/2$ for all $n$.  However, we will now show that this result in \cite{kass_vos} cannot be correct.

Intuitively, when $n$ is large, we would expect the $n$-dimensional spherical normal model (with a fixed number $N \not= 1$ of observations) to behave like the spherical normal model with known $\sigma$.  But the $\sigma$-known model has a Euclidean geometry and hence sectional curvatures of $0$, so the sectional curvatures for the the $n$-dimensional spherical normal model should approach $0$ as $n \to \infty$.  This is consistent with our result but not with that of \cite{kass_vos}.

A more careful argument can be given by interpreting the model for $n$ independent and identically distributed univariate normal random variables $y_1, \ldots, y_n \sim N(\mu, \sigma^2)$ as a sub-model of the $n$-dimensional spherical normal model.  If we define $f(\mu, \sigma) \defeq (\mu/\sqrt{2}, \ldots, \mu/\sqrt{2}, \sigma)$ then $f$ maps the $\mu, \sigma$ parameterisation of the former model into Kass and Vos' $z$ parameterisation of the latter model (in a way that respects likelihood functions).  The Jacobian matrix $J$ of $f$ is
$$ J =  \left[ \begin{array}{cc} \vec{1}/\sqrt{2} &  \vec{0} \\
                                    0 & 1  \end{array} \right]$$
where $\vec{1}$ and $\vec{0}$
are $n \times 1$ column matrices with all entries equal to $1$ and $0$, respectively.
So by the change-of-variables formula (Lemma \ref{L:pullback}, below), if the formulae of \cite[\S 7.4.3]{kass_vos} were true then the Fisher information metric for $n$ independent and identically distributed univariate normal random variables would be
$$ J^T ( 2 \sigma^{-2} I_{n+1} ) J = 2 \sigma^{-2} J^T J =
2 \sigma^{-2} \left[ \begin{array}{cc} \vec{1}^T/\sqrt{2} &  0 \\
                                    \vec{0}^T & 1  \end{array} \right]
\left[ \begin{array}{cc} \vec{1}/\sqrt{2} &  \vec{0} \\
                                    0 & 1  \end{array} \right]
= \sigma^{-2} \left[ \begin{array}{cc} n &  0 \\
                                    0 & 2  \end{array} \right],$$
which cannot be correct because the Fisher information matrix should scale linearly with the sample size.

In a similar way, we can see that the Fisher information matrix of \cite[\S II(i)]{costa} is also incorrect.  This has been corrected in \cite{costa_2012}, though the sectional curvatures for the spherical normal model are not correct in either paper.

\section{The distribution of the sufficient statistic}
\label{S:suffstat}

If $y$ is a realization of a random variable $Y$ then the sufficient statistic $x \defeq T(y)$ is a realization of a different random variable $X = T(Y)$.  It is a remarkable fact for exponential families \cite[p. 127]{barndorff-nielsen}, provable by a direct application of the smooth co-area formula,
that the PDF $p_X(x | \theta)$ of $X$ given $\theta$ is very similar to that of $Y$, namely
\begin{eqnarray}
p_X(x | \theta) = \exp(\theta \cdot x) h_X(x)/Z(\theta)
\label{E:PDFx}
\end{eqnarray}
where $h_X(x)$ is some function of $x$ (which is not closely related to $h_Y$, in general).  Therefore the PDFs for $X$ given $\theta$ form a natural exponential family with the same natural parameter and the same partition function as the exponential family for $Y$.

Let $\mathcal{X}$ be the set of all sufficient statistics, i.e., let $\mathcal{X}$ be the image $T$ as given in (\ref{E:suffstat}).

\begin{lemma}
\label{L:calX}
When $p < n$, $\mathcal{X}$ is the solid paraboloid
\beq \label{E:calX}
\mathcal{X} = \{ x \in \R^{p+1} \mid  x_{p+1} \ge x_1^2 + \ldots + x_p^2  \}
\eeq
and when $p=n$, $\mathcal{X}$ is the paraboloid $\{ x \in \R^{p+1} \mid  x_{p+1} = x_1^2 + \ldots + x_p^2  \}$.
\end{lemma}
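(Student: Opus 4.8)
The plan is to compute the image of the map $T(y) = (B^T y, \|y\|^2)$ directly, separating the two cases according to whether the linear map $y \mapsto B^T y$ is surjective onto $\R^p$ only (the case $p<n$) or an isometry onto all of $\R^n$ (the case $p=n$). First I would decompose $\R^n$ orthogonally as $\col A \oplus (\col A)^\perp$, writing $y = B u + w$ with $u = B^T y \in \R^p$ and $w \in (\col A)^\perp$; then $\|y\|^2 = \|u\|^2 + \|w\|^2$ since $B$ has orthonormal columns. Thus $T(y) = (u, \|u\|^2 + \|w\|^2)$, and the last coordinate $x_{p+1}$ ranges over $[\,\|u\|^2, \infty)$ as $w$ ranges over the orthogonal complement, giving exactly the condition $x_{p+1} \ge x_1^2 + \cdots + x_p^2$.

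The case split then comes down to the dimension of $(\col A)^\perp$, which is $n - p$ because $A$ has full rank $p$. When $p < n$ this complement is nontrivial, so $\|w\|^2$ attains every nonnegative value and $\mathcal{X}$ is the full solid paraboloid \eqref{E:calX}; I would also note that every such point is attained (surjectivity of $B^T$ onto $\R^p$ handles the first $p$ coordinates, and then $w$ is chosen with the appropriate norm). When $p = n$ we have $B B^T = I_n$, so $B$ is an orthogonal matrix, $w = 0$ forced, and $\|y\|^2 = \|B^T y\|^2 = x_1^2 + \cdots + x_p^2$ exactly; conversely any point on the paraboloid is $T(B x_{[1:p]})$. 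This gives the boundary paraboloid in the degenerate case.

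I do not expect any serious obstacle here: the argument is essentially the Pythagorean identity for the orthogonal decomposition induced by $B B^T$, together with the rank hypothesis on $A$ to pin down the dimension of the complement. The only point requiring a moment's care is checking that the map is genuinely onto the claimed set rather than merely into it — i.e., that for every candidate $x$ in the paraboloid (solid or not) there is an actual $y \in \R^n$ with $T(y) = x$ — but this is immediate from the explicit construction $y = B x_{[1:p]} + w$ with $\|w\|^2 = x_{p+1} - \|x_{[1:p]}\|^2$. It is also worth remarking that this is consistent with the earlier observation that $\mathcal{X}$ should be an open-ish region of full dimension $p+1$ precisely when $p<n$, whereas at $p=n$ it collapses to a $p$-dimensional hypersurface.
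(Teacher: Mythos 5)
Your proposal is correct and follows essentially the same route as the paper: the orthogonal decomposition $y = BB^Ty + (I - BB^T)y$, Pythagoras to get $x_{p+1} = x_1^2 + \cdots + x_p^2 + \|(I-BB^T)y\|^2$, and an explicit preimage $y = Bx_{[1:p]} + w$ with $\|w\|^2 = x_{p+1} - \|x_{[1:p]}\|^2$ for surjectivity. Your treatment of the $p=n$ case is if anything slightly more explicit than the paper's, which leaves that case to the reader.
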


\begin{proof}
Deferred to the Appendix.
\end{proof}

We can now calculate the distribution of $X$ given $\theta$.  In light of (\ref{E:PDFx}), this amounts to finding $h_X(x)$, though our proof will also establish (\ref{E:PDFx}) for linear regression.

\begin{lemma} \label{L:PDFx} The PDF $p_X(x | \theta)$ of $X$ given $\theta$ is as in (\ref{E:PDFx}) where
$$ h_X(x) = c_h (x_{p+1} - x_1^2 - \ldots - x_p^2)^{\frac{n-p-2}{2}} $$
and the constant $c_h = \left( 2^{\frac{n}{2}} \pi^{p/2} \gammafn\left(\frac{n-p}{2}\right)\right)^{-1}$, with $\gammafn$ being the gamma function.
\end{lemma}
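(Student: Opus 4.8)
The plan is to compute the PDF of $X = T(Y)$ directly from the PDF of $Y$ by integrating over the fibres of $T$, and then to read off $h_X$ by comparing with the canonical form (\ref{E:PDFx}). The starting point is that $Y \sim N_n(A\beta, \sigma^2 I_n)$ and that, by the remark after (\ref{E:suffstat}), the first $p$ coordinates $B^T y$ encode the orthogonal projection $BB^T y$ of $y$ onto $\col A$, while $\|y\|^2 = \|BB^Ty\|^2 + \|(I_n - BB^T)y\|^2 = x_1^2 + \cdots + x_p^2 + \|(I_n - BB^T)y\|^2$. Thus fixing $x = T(y)$ amounts to fixing the projection of $y$ onto $\col A$ (a point determined by $x_1,\dots,x_p$) together with the squared length $r^2 \defeq x_{p+1} - x_1^2 - \cdots - x_p^2$ of the component of $y$ orthogonal to $\col A$; the fibre $T^{-1}(x)$ is therefore a sphere of radius $r$ and dimension $n-p-1$ sitting inside the $(n-p)$-dimensional orthogonal complement of $\col A$. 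This is exactly the geometry behind Lemma \ref{L:calX}, and it is why we need $p < n$ for a nondegenerate fibre (the case $p = n$ being the degenerate paraboloid already handled there).

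Concretely, I would change coordinates on $\R^n$ to an orthonormal frame adapted to the splitting $\R^n = \col A \oplus (\col A)^\perp$: write $y$ in terms of $u = B^T y \in \R^p$ and $v = C^T y \in \R^{n-p}$, where $C$ is an $n \times (n-p)$ matrix whose columns are an orthonormal basis for $(\col A)^\perp$. Since $(B\,|\,C)$ is orthogonal, this is a volume-preserving linear change of variables, and the Gaussian PDF (\ref{E:PDFy1}) becomes a product of a Gaussian in $u$ (centred at $B^T A\beta$) and a spherically symmetric Gaussian in $v$ (centred at $0$, since $\E[v] = C^T A\beta = 0$). The map $T$ in these coordinates is $(u,v) \mapsto (u, \|u\|^2 + \|v\|^2)$, so to get the density of $X$ I only need the density of $\|v\|^2$: passing to polar coordinates in $\R^{n-p}$, the Jacobian contributes the surface-area factor of the sphere of radius $\|v\|$, i.e. $\|v\|^{n-p-1}$ times a constant, and then substituting $s = \|v\|^2$ (so $\|v\|^{n-p-1}\,d\|v\| \propto s^{(n-p-2)/2}\,ds$, up to a constant) shows $\|v\|^2$ has a scaled chi-squared / Gamma density proportional to $s^{(n-p-2)/2} e^{-s/2\sigma^2}$. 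Reassembling, $p_X(x\,|\,\theta)$ is the Gaussian in $u = x_{[1:p]}$ times this Gamma density evaluated at $s = r^2 = x_{p+1} - \|x_{[1:p]}\|^2$, with an overall normalising constant; regrouping the exponential terms into the inner product $\theta \cdot x$ using (\ref{E:suffstat}) and (\ref{E:logpartfn}) isolates the factor multiplying $e^{\theta\cdot x}/Z(\theta)$, which is $h_X(x) = c_h\, r^{n-p-2}$, and tracking the Gaussian normalisation constant $(2\pi)^{-p/2}$ together with the Gamma normalisation $\bigl(2^{(n-p)/2}\gammafn((n-p)/2)\bigr)^{-1}$ and a factor from the sphere surface area gives $c_h = \bigl(2^{n/2}\pi^{p/2}\gammafn((n-p)/2)\bigr)^{-1}$.

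The routine part is the bookkeeping of constants, which I would organise by first writing $p_X$ up to a multiplicative constant, confirming the $x$-dependence matches $h_X(x)\,e^{\theta\cdot x}/Z(\theta)$ for the stated $h_X$ (this also re-proves (\ref{E:PDFx}) for linear regression, as promised in the text), and only then pinning down $c_h$ either by carefully multiplying the three normalisation factors above or, as an independent check, by demanding $\int_{\mathcal{X}} h_X(x)\,e^{\theta\cdot x}\,dx = Z(\theta)$ for one convenient value of $\theta$ and evaluating the resulting Gaussian-times-Gamma integral over the solid paraboloid (\ref{E:calX}). The main obstacle is not conceptual but purely computational: getting every power of $2$, $\pi$ and $\sigma$ right through the orthogonal change of variables, the polar-coordinate Jacobian in dimension $n-p$, and the $s = \|v\|^2$ substitution, since an error in any one of the three normalisation factors propagates into $c_h$; using the two-independent-routes check above is the safest way to guard against that.
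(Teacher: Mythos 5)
Your proposal is correct and follows essentially the same route as the paper's own proof: decompose $y$ via the orthogonal splitting $\R^n = \col A \oplus (\col A)^\perp$, observe that $x_{[1:p]} = B^T y \sim N_p(B^TA\beta, \sigma^2 I_p)$ is independent of $\|(I_n - BB^T)y\|^2 = x_{p+1} - \|x_{[1:p]}\|^2_{\R^p}$, which is $\sigma^2$ times a chi-squared variable with $n-p$ degrees of freedom, multiply the two densities, and regroup the exponentials into $\exp(\theta\cdot x)/Z(\theta)$ to read off $h_X$ and $c_h$. The only cosmetic difference is that you derive the chi-squared density from scratch via polar coordinates where the paper simply invokes it, and you phrase the factorisation as fibre integration rather than as a conditional-times-marginal product; the constants work out identically.
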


\begin{proof}
Deferred to the Appendix.
\end{proof}

\section{The SMML estimator for linear regression}
\label{S:SMML}

In this section, we first recall the definition of the SMML estimator, which is a Bayesian estimator motivated by information-theoretic considerations.  We then describe the expectation parameter space of the linear regression model and show that this can be naturally identified with the space $\mathcal{X}$ of sufficient statistics.  By Section \ref{S:hyp}, this gives $\mathcal{X}$ a hyperbolic metric, and we finish by showing that an SMML estimator corresponds to a partition of $\mathcal{X}$ into hyperbolic polytopes.

\subsection{SMML estimators}
\label{SS:SMMLdefn}

The SMML estimator with $m$ regions is defined as follows, where $m \ge 1$ is an integer \cite[Chapter 3]{wallace}.  Suppose we are given a partition $U_1, \ldots, U_m$ of $\mathcal{X}$, parameters $\theta_1, \ldots, \theta_m \in \Theta$ (the assertions) and real numbers $q_1, \ldots, q_m \in \R$ (the coding probabilities for the assertions) so that $1 = q_1+ \ldots +q_m$ and each $q_i > 0$.  Let $\htheta$ and $\hq$ be the step functions given by $\htheta(x) \defeq \theta_i$ and $\hq(x) \defeq q_i$ where $i$ is the unique integer for which $x \in U_i$.  If the data space $\mathcal{X}$ is countable then we can use this structure to transmit any data point $x \in \mathcal{X}$ to an imaginary receiver by first transmitting the assertion $\htheta(x)$ using an optimal codebook constructed from the coding probabilities $q_1, \ldots, q_m$, and second transmitting $x$ using an optimal coding based on the assertion $\htheta(x)$.  For linear regression, $\mathcal{X}$ is not countable, so we simply truncate all data points to a finite but large number $N$ of binary places and proceed as above \cite[p. 167--168]{wallace}.  Then the (idealized) length of the assertion for $x$ is $-\log \hq(x)$ and the length of the detail is $-\log p(x | \htheta(x))$, so the average length of the message used to encode $x$ is
\begin{equation}\label{E:I1}
I_1 = - \E[ \log \hq(X) + \log f(X | \htheta(X))]
\end{equation}
plus the constant $N \log 2$ \cite[p. 168]{wallace}.  Here, $X$ is a random variable distributed according to the {\em marginal PDF}
$$ r(x) \defeq \int_\Theta \pi_\Theta(\theta) p_X(x | \theta) d\theta. $$

\begin{defn}
\label{D:SMML}
An SMML estimator with $m$ regions is the function $\htheta(x)$ corresponding to any partition $U_1, \ldots, U_m$, assertions $\theta_1, \ldots, \theta_m$ and coding probabilities $q_1, \ldots, q_m$ which minimize $I_1$.
\end{defn}

Note that an SMML estimator with $m$ regions might not exist or might not be unique in general, however we will often refer to `the' SMML estimator when discussing this estimator informally.

Wallace \cite[p. 156]{wallace} gave conditions which the $U_1, \ldots, U_m$, $\theta_1, \ldots, \theta_m$ and $q_1, \ldots, q_m$ for an SMML estimator must satisfy.  In the case of an exponential family with PDF of the general form (\ref{E:PDFx}), these are
\begin{eqnarray}
U_i &=& \{ x \in \mathcal{X} \mid \lambda_i(x) \le \lambda_j(x) \mbox{ for all $j=1,\ldots, m$} \}  \label{E:U} \\
q_i &=& \int_{U_i} r(x) dx \label{E:q} \\
\theta_i &=& \fxt^{-1}\left( \frac{1}{q_i} \int_{U_i} x r(x) dx \right) \label{E:theta}
\end{eqnarray}
where $\lambda_i$ is the linear function of $x$ given by $\lambda_i(x) = -\log q_i - x \cdot \theta_i + \log Z(\theta_i)$ and $\fxt$ is an invertible function which will be defined in Section \ref{SS:expparam}, below.

Note that (\ref{E:U}) shows that each $U_i$ is a convex polytope (with respect to the affine structure on $\mathcal{X}$ inherited from its ambient vector space).  So $U_1, \ldots, U_m$ is a partition of $\mathcal{X}$ into convex polytopes.

\subsection{The expectation parameter space and its identification with the space of sufficient statistics}
\label{SS:expparam}

The expectation parameter $\xi$ corresponding to the natural parameter $\theta$ is defined to be the expected value $\E[X|\theta]$ of $X$ given $\theta$, i.e., the expected value of a random variable with the PDF $p_X(x | \theta)$ given in Lemma \ref{L:PDFx}.  Let $\Xi$ be the space of all expectation parameters and let $\fxt: \Theta \to \Xi$ be the map between the natural and expectation parameterisations, that is,
\beq \label{E:expectationmap}
\fxt(\theta) \defeq \int_\mathcal{X} x p_X(x | \theta) dx.
\eeq
Since (\ref{E:expectationmap}) expresses the expectation parameter $\xi = \fxt(\theta)$ corresponding to $\theta$ as a convex combination of elements of $\mathcal{X}$, it is clear that $\xi$ lies in the same vector space as $\mathcal{X}$.  In the case of linear regression when $p<n$, $\mathcal{X}$ is convex by (\ref{E:calX}), so (\ref{E:expectationmap}) further implies that $\xi \in \mathcal{X}$.  So in our main case of interest,
\begin{equation}\label{E:expsubsetdata}
\Xi \subseteq \mathcal{X}.
\end{equation}
In fact, it is known that the expectation parameter space $\Xi$ can be naturally identified with the interior of $\mathcal{X}$ for many exponential families \cite[Corollary 9.6]{barndorff-nielsen}.  We will sketch a proof of this fact, in the case of linear regression, after calculating the reparameterisation map $\fxt$.

By a standard result for exponential families (e.g. see \cite[Theorem 2.2.1]{kass_vos}), the partition function $Z$ is infinitely differentiable, $\fxt$ can be calculated as
\begin{eqnarray}
\fxt(\theta) = \grad |_\theta \log Z  \label{E:gradpsi}
\end{eqnarray}
(where $\grad |_\theta \log Z$ is the gradient of $\log Z$ evaluated at $\theta$) and $\fxt$ is a diffeomorphism (i.e., an infinitely differentiable function with an infinitely differentiable inverse) from $\Theta$ to $\Xi$.  So from (\ref{E:logpartfn}) and (\ref{E:gradpsi}) we have
\begin{equation}\label{E:fxt}
\fxt(\theta) = \frac{1}{-2\theta_{p+1}}\left(\theta_1, \ldots, \theta_p, n + \frac{\theta_1^2 + \ldots + \theta_p^2}{-2\theta_{p+1}} \right).
\end{equation}
It follows easily from this and the defining property of $\Xi$ (that $\Xi$ is the image of $\fxt$) that
$$\Xi = \{ \xi \in \R^{p+1} \mid \xi_{p+1} > \xi_1^2 + \ldots + \xi_p^2  \},$$
so comparing this to (\ref{E:calX}) and using (\ref{E:expsubsetdata}) shows that $\Xi$ is the interior of $\mathcal{X}$, i.e., up to a set with zero Lebesgue measure, there is a natural identification $\Xi = \mathcal{X}$ (when $p<n$).  So since $\Xi$ has a natural hyperbolic metric (by Section \ref{SS:upperhalfspace} and the fact that reparameterisation maps are isometries), this means that the interior of $\mathcal{X}$ has one, too.

Table \ref{T:reparam} gives the reparameterisation maps between the three parameterisations introduced so far.

\begin{sidewaystable}
\centering
\noindent
\begin{tabular}{|l|lll|}
\hline
  & Natural $\theta$ & Expectation $\xi$ & Upper half space $\uh$ \\
\hline
$\theta=$ & $\theta$ & $\frac{n}{V(\xi)}(\xi_1, \ldots, \xi_p,-\frac{1}{2})$ &
   $\frac{2n}{\uh_{p+1}^2}(\uh_1, \ldots, \uh_p,-\frac{1}{2})$  \\
$\xi=$ & $\frac{1}{-2\theta_{p+1}}(\theta_1, \ldots, \theta_p, n + \frac{\theta_1^2 + \ldots + \theta_p^2}{-2\theta_{p+1}})$ &
   $\xi$ & $(\uh_1, \ldots, \uh_p, \uh_1^2 + \ldots + \uh_p^2 + \frac{\uh_{p+1}^2}{2})$  \\
$\uh=$ & $\frac{1}{-2\theta_{p+1}}(\theta_1, \ldots, \theta_p,\sqrt{-4n\theta_{p+1}})$ &
   $(\xi_1, \ldots, \xi_p,\sqrt{2 V(\xi)})$ & $\uh$  \\
\hline
$\horomap(\cdot)$ & $2\theta$ & $(\xi_1, \ldots, \xi_p, \xi_{p+1} - V(\xi)/2)$ &
   $(\uh_1, \ldots, \uh_p, \uh_{p+1}/\sqrt{2})$ \\
$\horomap^{-1}(\cdot)$ & $\theta/2$ & $(\xi_1, \ldots, \xi_p, \xi_{p+1} + V(\xi))$ &
   $(\uh_1, \ldots, \uh_p, \sqrt{2} \uh_{p+1})$ \\
\hline
\end{tabular}
\caption{Maps between different parameterisations of the linear regression model, as well as some other useful quantities, where $V(\xi) = \xi_{p+1} - \xi_1^2 - \ldots -\xi_p^2$.}
\label{T:reparam}
\end{sidewaystable}

\subsection{Affine and hyperbolic lines in the expectation parameter space}
\label{SS:affinehyp}

We have just shown that the interior of the data space $\mathcal{X}$ can be naturally identified with the expectation parameter space $\Xi$.  We will now describe the relationship between the hyperbolic structure on $\Xi$ (coming from the Fisher information metric) and the affine structure on $\Xi$ (inherited from the vector space $\R^{p+1}$ containing $\Xi$).  We will show there is a natural function $\horomapx: \Xi \to \Xi$ which maps affine lines in $\Xi$ to hyperbolic lines in $\Xi$.  Since the partition $U_1, \ldots, U_m$ corresponding to an SMML estimator consists of affine convex polytopes (by Section \ref{SS:SMMLdefn}), this shows that $\horomapx(U_1), \ldots, \horomapx(U_m)$ is essentially a partition of the hyperbolic space $\mathcal{X}$ into hyperbolic convex polytopes.

Here, an {\em affine} plane $P$ is the non-empty set, in $\Xi \subseteq \R^{p+1}$, of solutions to a set of possibly non-homogeneous linear equations.  A {\em hyperbolic} plane $Q$ is any subset of the interior of $\Xi$ which contains the hyperbolic line (the image of a geodesic) through any two points of $Q$.  Note that in this terminology, affine and hyperbolic lines are just $1$-dimensional affine and hyperbolic planes (respectively).

Define $\horomapx: \Xi \to \Xi$ to be $\horomapx = \fxu \circ \horomapu \circ \fxu^{-1}$ where
$\horomapu: \uhspace \to \uhspace$ is given by
\beq
\label{E:horomap}
\horomapu(\uh) \defeq (\uh_1, \ldots, \uh_p,  \uh_{p+1}/\sqrt{2})
\eeq
and $\fxu:\uhspace \to \Xi$ is the reparameterisation map between $\uhspace$ and $\Xi$, i.e.,
\beq
\label{E:fxu}
\fxu(\uh) = (\uh_1, \ldots, \uh_p, \uh_1^2 + \ldots + \uh_p^2 + \frac{\uh_{p+1}^2}{2}),
\eeq
as can be calculated from the reparameterisation maps (\ref{E:suffstat}), (\ref{E:uhparam}) and (\ref{E:fxt}) (see Table \ref{T:reparam}).

The map $\horomapx$ can be interpreted in terms of the hyperbolic geometry as follows.  In the linear regression model, the point at infinity $\infty$ is a distinguished point on the sphere at infinity of the upper half-space $\uhspace$, and $\horomapu$ translates each point $\uh \in \uhspace$ away from $\infty$ along the geodesic through $\infty$ and $\uh$ by a distance $\log \sqrt{2}$.  Since this description only depends on the distinguished point $\infty$ and notions from hyperbolic geometry, which are both preserved by $\fxu$, the same interpretation holds for $\horomapx$.

\begin{lemma}
$P$ is an affine plane of $\Xi$ if and only if $\horomapx(P)$ is a hyperbolic plane of $\Xi$.  In particular, $\horomapx$ maps affine lines to hyperbolic lines.
\end{lemma}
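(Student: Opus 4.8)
The plan is to conjugate the whole picture into the upper half-space model $\uhspace$, where $\horomapu$ is just a rescaling of the last coordinate and the totally geodesic subsets are completely explicit. Since $\fxu\colon\uhspace\to\Xi$ is a reparameterisation map, hence a Riemannian isometry, it carries hyperbolic planes to hyperbolic planes in both directions; and by definition $\horomapx(P)=\fxu\bigl(\horomapu(\fxu^{-1}(P))\bigr)$, so that $\fxu^{-1}\circ\horomapx=\horomapu\circ\fxu^{-1}$. Hence $\horomapx(P)$ is a hyperbolic plane of $\Xi$ if and only if $\horomapu(\fxu^{-1}(P))$ is a hyperbolic plane of $\uhspace$, and it suffices to prove that $P\mapsto\horomapu(\fxu^{-1}(P))$ takes the affine planes of $\Xi$ bijectively onto the hyperbolic planes of $\uhspace$.

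For the direction ``$P$ affine $\Rightarrow\horomapx(P)$ hyperbolic'' I would write $P=\{\xi\in\Xi\mid M\xi=c\}$ and compute $\horomapu(\fxu^{-1}(P))$ using the formulas in Table~\ref{T:reparam} together with $\horomapu(\uh)=(\uh_1,\dots,\uh_p,\uh_{p+1}/\sqrt{2})$. The key algebraic point is that the substitution $\uh_{p+1}=\sqrt{2}\,w_{p+1}$ turns the expression $\uh_1^2+\dots+\uh_p^2+\uh_{p+1}^2/2$ occurring in $\fxu$ (see (\ref{E:fxu})) into the full ``spherical'' form $w_1^2+\dots+w_p^2+w_{p+1}^2$. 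Consequently each row of $M\xi=c$ becomes, in the $w$-coordinates, either a linear equation in $w_1,\dots,w_p$ alone --- a Euclidean hyperplane orthogonal to $\partial\uhspace=\{w_{p+1}=0\}$ --- or, after completing the square in $w_1,\dots,w_p$, the equation of a Euclidean sphere centred at a point of $\partial\uhspace$. Both kinds of set meet $\uhspace$ in a totally geodesic hypersurface of the upper half-space model of $\BH^{p+1}$ \cite{ratcliffe}, an intersection of totally geodesic submanifolds is again totally geodesic, and non-emptiness of $P$ forces non-emptiness (hence positive radii for the spheres produced) of the intersection. Thus $\horomapu(\fxu^{-1}(P))$ is a non-empty complete totally geodesic submanifold of $\uhspace$, hence a hyperbolic plane, and applying the isometry $\fxu$ shows $\horomapx(P)$ is a hyperbolic plane of $\Xi$.

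For the converse I would run the same computation backwards through $\horomapx^{-1}=\fxu\circ\horomapu^{-1}\circ\fxu^{-1}$, where $\horomapu^{-1}(w)=(w_1,\dots,w_p,\sqrt{2}\,w_{p+1})$. A hyperbolic plane $Q$ of $\Xi$ pulls back under the isometry $\fxu$ to a totally geodesic submanifold of $\uhspace$, which by the classification of such submanifolds in the upper half-space model \cite{ratcliffe} is cut out inside $\uhspace$ by finitely many equations, each defining a Euclidean hyperplane orthogonal to $\partial\uhspace$ or a Euclidean sphere centred on $\partial\uhspace$. Applying $\horomapu^{-1}$ (which multiplies $w_{p+1}$ by $\sqrt{2}$, undoing the substitution above) and then $\fxu$ (which sets $\xi_{p+1}=\uh_1^2+\dots+\uh_p^2+\uh_{p+1}^2/2$, so that $\uh_{p+1}^2/2=V(\xi)=\xi_{p+1}-\xi_1^2-\dots-\xi_p^2$), a hyperplane equation stays linear in $\xi$, while a sphere equation $\|w_{[1:p]}-b\|^2+w_{p+1}^2=\rho^2$ becomes $\xi_{p+1}=2\,b\cdot(\xi_1,\dots,\xi_p)+\rho^2-\|b\|^2$, again linear. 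So $\horomapx^{-1}(Q)$ is the solution set in $\Xi$ of a system of linear equations, and it is non-empty since $Q$ is; hence it is an affine plane of $\Xi$. The final assertion then follows at once, since $\horomapx$ is a diffeomorphism and so preserves dimension, and affine (resp.\ hyperbolic) lines are by definition the $1$-dimensional affine (resp.\ hyperbolic) planes.

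The main obstacle, I expect, is the converse direction: it rests on the precise classification of the totally geodesic subspaces of the upper half-space model --- the Euclidean hyperplanes orthogonal to the boundary, the Euclidean spheres centred on it, and their intersections --- and on identifying the lemma's notion of ``hyperbolic plane'' (a subset closed under passing to the hyperbolic line through any two of its points) with ``totally geodesic submanifold'' to the extent needed here. The rest is the elementary but slightly fiddly bookkeeping of completing squares, discarding degenerate rows of $M$, and checking that non-emptiness of the relevant sets propagates through $\fxu^{\pm 1}$ and $\horomapu^{\pm 1}$.
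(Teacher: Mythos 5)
Your proposal is correct and follows essentially the same route as the paper: conjugate everything into the upper half-space model via $\fxu$, observe that the $\sqrt{2}$ rescaling of the last coordinate turns the quadratic $\uh_1^2+\dots+\uh_p^2+\uh_{p+1}^2/2$ appearing in $\fxu$ into the full spherical form, and then match the resulting sphere/vertical-hyperplane equations against the standard classification of hyperbolic planes in the upper half-space model from \cite{ratcliffe}. The only cosmetic differences are that the paper first treats codimension-one planes and obtains lower-dimensional ones as intersections (whereas you work with a general linear system directly), and that you spell out the converse computation which the paper leaves as a one-line remark.
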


\begin{proof}
The upper half-space model of hyperbolic $(p+1)$-dimensional space coincides with $\uhspace$ and the metrics on the two Riemannian manifolds are constant multiples of each other, so a hyperbolic plane of one is a hyperbolic plane of the other.  But the $p$-dimensional hyperbolic planes of the upper half space model all have a known form \cite{ratcliffe}, so the hyperbolic $p$-planes in $\uhspace$ are of the form
\beq \label{E:hypplane1}
Q = \{ \uh \in \uhspace \mid (\uh_1 - c_1)^2 + \ldots + (\uh_{p+1} - c_{p+1})^2 = R^2 \}
\eeq
or
\beq \label{E:hypplane2}
Q = \{ \uh \in \uhspace \mid c_1 \uh_1 + \ldots + c_{p+1}\uh_{p+1} = d \}
\eeq
for some $R>0$, $d \in \R$ and $c \in \R^{p+1}$ with $c_{p+1} = 0$.  And since $\fxu$ is an isometry, the hyperbolic $p$-planes in $\Xi$ are all of the form $\fxu^{-1}(Q)$ for some hyperbolic plane $Q$ in $\uhspace$.

Now, $P \subseteq \Xi$ is an affine $p$-plane if and only if $P \not= \emptyset$ and
$$ P = \{ \xi \in \Xi \mid L(\xi) = 0 \} $$
for some (affine) linear function $L:\Xi \to \R$, say $L(\xi) = a_1 \xi_1 + \ldots + a_{p+1}\xi_{p+1} + b$.  So
\begin{eqnarray}
(L \circ \fxu \circ \horomapu^{-1})(\uh)
&=& (L \circ \fxu)(\uh_1, \ldots, \uh_p,  \sqrt{2}\uh_{p+1}) \mbox{ by (\ref{E:horomap})} \nonumber \\
&=& L(\uh_1, \ldots, \uh_p, \uh_1^2 + \ldots + \uh_p^2 + \uh_{p+1}^2) \mbox{ by (\ref{E:fxu})} \nonumber \\
&=& a_1 \uh_1 + \ldots + a_p \uh_p + a_{p+1}(\uh_1^2 + \ldots + \uh_p^2 + \uh_{p+1}^2) + b \nonumber \\
&=& a_{p+1} \left( (\uh_1 - c_1)^2 + \ldots + (\uh_{p+1} - c_{p+1})^2 - R^2 \right)  \label{E:LfH} \end{eqnarray}
if $a_{p+1} \not= 0$, where $c_{p+1} = 0$, $c_i = -a_i/2a_{p+1}$ for $i=1, \ldots, p$ and $R^2 = -b/a_{p+1} + c_1^2 + \ldots + c_p^2$.  Note that $R^2 > 0$ because $P \not= \emptyset$ so $L$ has a zero in $\Xi$ and hence $L \circ \fxu \circ \horomapu$ must have a zero in $\uhspace$.  Comparing (\ref{E:LfH}) with (\ref{E:hypplane1}) when $a_{p+1} \not= 0$, or comparing a similar expression with (\ref{E:hypplane2}) when $a_{p+1} = 0$, shows that
\beq \label{E:lincomp}
\{ \uh \in \uhspace \mid (L \circ \fxu \circ \horomapu^{-1})(\uh) = 0 \} \mbox{ is a hyperbolic $p$-plane in $\uhspace$.}
\eeq

Now, if $U$ and $V$ are any two sets and $f:U \to V$ and $g:U \to \R$ are any functions with $f$ injective (one-to-one) then
$$f(\{ u \in U \mid g(u) = 0 \})
= \{ v \in V \mid g(f^{-1}(v)) = 0 \}.$$
Applying this to the case $f = \horomapu \circ \fxu^{-1}$, $g=L$, $U = \Xi$ and $V = \uhspace$ gives
\begin{eqnarray*}
\horomapx(P) &=& \horomapx(\{ \xi \in \Xi \mid L(\xi) = 0 \}) \\
&=& (\fxu \circ \horomapu \circ \fxu^{-1}) (\{ \xi \in \Xi \mid L(\xi) = 0 \}) \\
&=& \fxu(\{ \uh \in \uhspace \mid (L \circ \fxu \circ \horomapu^{-1})(\uh) = 0 \}) \\
&=& \fxu(Q)
\end{eqnarray*}
where $Q$ is a hyperbolic $p$-plane in $\uhspace$ by (\ref{E:lincomp}).  Therefore $\horomapx(P) = \fxu(Q)$ is a hyperbolic $p$-plane in $\Xi$.  Also, any hyperbolic $p$-plane arises in such a way, so this proves the lemma for $p$-dimensional affine and hyperbolic planes.  So lastly note that an affine or hyperbolic plane of any dimension can be expressed as an intersection of $p$-dimensional planes, and that such intersections always give planes, so this proves the lemma.
\end{proof}

\section{The Jeffreys prior and the marginal distribution}

In this section we will put the (improper) Jeffreys prior $\pi_\Theta(\theta)$ on $\theta$ and calculate the marginal distribution of $X$, i.e., the distribution of $X$ not conditioned on $\theta$.  We choose the Jeffreys prior because it is natural, it makes few assumptions about the parameter values (i.e., it is uninformative) and it is tractable to work with.  It also has a geometrical interpretation, so this choice preserves the symmetries of, and hence the close connections with, the underlying hyperbolic geometry.

\subsection{The Jeffreys prior on the natural parameter space}

From the definition (\ref{E:defFI}) and the expression (\ref{E:PDFy2}), it is easy to see that the Fisher information matrix $g_\Theta$ corresponding to the natural parameterisation of the linear regression model (or any other exponential family \cite{kass_vos}) is the Hessian of the log-partition function.  So from (\ref{E:logpartfn}),
\beq
\label{E:FImatrixTheta}
g_\Theta =  \frac{1}{-2\theta_{p+1}}  \left[ \begin{array}{cc} I_p &  -\theta_{p+1}^{-1} \theta_{[1:p]}  \\
-\theta_{p+1}^{-1} \theta_{[1:p]}^T &  -n \theta_{p+1}^{-1} + \theta_{p+1}^{-2}(\theta_1^2 + \ldots + \theta_p^2) \end{array} \right]
\eeq
where $\theta_{[1:p]}$ is the $p \times 1$ column matrix with entries $\theta_1, \ldots, \theta_p$.  Recall that $\theta_{p+1}<0$ so all entries of $g_\Theta$ are positive.  The (improper) Jeffreys prior is defined to be $\pi_\Theta(\theta) \defeq \sqrt{\det g_\Theta}$, so using (\ref{E:FImatrixTheta}) and expanding the determinant of $g_\Theta$ along its bottom row gives
\beq \label{E:JeffreysTheta}
\pi_\Theta(\theta) = \sqrt{n} 2^{-\frac{p+1}{2}} (-\theta_{p+1})^{-\frac{p+2}{2}}.
\eeq

\subsection{The marginal distribution}
\label{SS:marginal}

We can now calculate the marginal distribution of $X$ (not conditioned on $\theta$), whose PDF $r(x)$ is defined to be
$$ r(x) \defeq \int_\Theta \pi_\Theta(\theta) p_X(x | \theta) d\theta. $$

\begin{lemma}
\label{L:marginal}
If $\pi_\Theta(\theta)$ is the Jeffreys prior then the marginal distribution of $X$ is
$$ r(x) = c_r (x_{p+1} - x_1^2 - \ldots - x_p^2)^{-\frac{p+2}{2}}$$
where $c_r = \sqrt{n} 2^{\frac{p-1}{2}} \gammafn\left( \frac{n}{2} \right) / \gammafn\left(\frac{n-p}{2}\right)$ and
$\gammafn$ is the gamma function.
\end{lemma}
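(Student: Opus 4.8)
The plan is to compute the integral $r(x) = \int_\Theta \pi_\Theta(\theta)\, p_X(x\mid\theta)\, d\theta$ directly, using the explicit formulae already assembled in the excerpt. Substituting the Jeffreys prior \eqref{E:JeffreysTheta}, the exponential-family PDF \eqref{E:PDFx} with $h_X$ from Lemma \ref{L:PDFx}, and the partition function \eqref{E:logpartfn}, the integrand becomes, up to the constant $c_h \sqrt{n}\, 2^{-(p+1)/2}$ and the factor $(x_{p+1}-x_1^2-\ldots-x_p^2)^{(n-p-2)/2}$ which can be pulled outside the integral,
$$(-\theta_{p+1})^{-\frac{p+2}{2}} \exp\!\left(\theta\cdot x\right) \exp\!\left(\tfrac{n}{2}\log(-2\theta_{p+1}) + \tfrac{\theta_1^2+\ldots+\theta_p^2}{4\theta_{p+1}}\right).$$
So I would first carry out the $\theta_1,\ldots,\theta_p$ integrals. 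Writing $x_{[1:p]}$ for the first $p$ coordinates of $x$ and completing the square in the exponent (the relevant quadratic form is $\tfrac{1}{4\theta_{p+1}}\sum_i\theta_i^2 + \sum_i\theta_i x_i$, which is negative definite in $\theta_{[1:p]}$ since $\theta_{p+1}<0$), each of the $p$ one-dimensional Gaussian integrals contributes a factor proportional to $(-\theta_{p+1})^{1/2}$, together with an overall factor $\exp(-\theta_{p+1}\|x_{[1:p]}\|^2)$ from the completed square. This reduces the problem to a single integral over $\theta_{p+1}\in(-\infty,0)$.

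Next I would perform the remaining integral over $s \defeq -\theta_{p+1}\in(0,\infty)$. After collecting the powers of $(-\theta_{p+1})$ — namely $s^{-(p+2)/2}$ from the prior, $s^{n/2}$ from the partition function, and $s^{p/2}$ from the $p$ Gaussian integrals, giving $s^{(n-2)/2}$ — and collecting the terms linear in $s$ in the exponent, the exponent becomes $-s\,(x_{p+1} - \|x_{[1:p]}\|^2)$, i.e.\ $-s\,V(x)$ in the notation of Table \ref{T:reparam}. (It is worth pausing here to check the sign: on the support $\mathcal X$ we have $x_{p+1}\ge\|x_{[1:p]}\|^2$, so $V(x)\ge 0$ and the integral converges.) The integral is then a Gamma integral, $\int_0^\infty s^{(n-2)/2} e^{-s V(x)}\, ds = \gammafn(n/2)\, V(x)^{-n/2}$, valid for $n>0$. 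Multiplying $V(x)^{-n/2}$ by the factor $V(x)^{(n-p-2)/2}$ that was set aside from $h_X$ gives $V(x)^{-(p+2)/2}$, which is the claimed dependence on $x$.

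Finally I would assemble the constant. Tracking the $2\pi$'s and powers of $2$ through the $p$ Gaussian integrals, combining with $c_h = \bigl(2^{n/2}\pi^{p/2}\gammafn((n-p)/2)\bigr)^{-1}$ and the $\sqrt n\, 2^{-(p+1)/2}$ from the Jeffreys prior and the $\gammafn(n/2)$ from the $s$-integral, should collapse to $c_r = \sqrt n\, 2^{(p-1)/2}\,\gammafn(n/2)/\gammafn((n-p)/2)$. I expect the constant bookkeeping to be the only genuinely fiddly part — the structure of the calculation is routine Gaussian-then-Gamma integration — so the main obstacle is simply keeping the powers of $2$, $\pi$ and $(-\theta_{p+1})$ straight through the two stages, together with one subtlety: Fubini/Tonelli must be invoked to justify interchanging the order of the $\theta_{[1:p]}$ and $\theta_{p+1}$ integrations (legitimate since the integrand is nonnegative), and the whole computation requires $n-p>0$ for $c_h$ to be finite, consistent with the standing assumption $p<n$ needed for $\mathcal X$ to be the solid paraboloid.
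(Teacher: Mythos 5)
Your proposal is correct and follows essentially the same route as the paper's proof: pull out the Jeffreys prior and $h_X$, complete the square in $\theta_{[1:p]}$ to reduce the inner integral to a $p$-fold Gaussian contributing $(-4\pi\theta_{p+1})^{p/2}$, and then evaluate the remaining $\theta_{p+1}$ integral as the Gamma/Laplace integral $\int_0^\infty t^{(n-2)/2}e^{-tV(x)}\,dt = \gammafn(n/2)V(x)^{-n/2}$. Your power counting $s^{-(p+2)/2}\cdot s^{n/2}\cdot s^{p/2}=s^{(n-2)/2}$ and the resulting exponent $V(x)^{(n-p-2)/2 - n/2} = V(x)^{-(p+2)/2}$ both match the paper, and the constant does indeed collapse to $c_r$ as you predict.
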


\begin{proof}
Deferred to the Appendix.
\end{proof}

\subsection{The marginal PDF is a multiple of the hyperbolic volume density}
\label{SS:marginal_hypvol}

We will now show that the marginal PDF on $\mathcal{X}$ corresponding to the Jeffreys prior is a constant multiple of the hyperbolic volume density (recall that the interior of $\mathcal{X}$ has a natural hyperbolic metric by Section \ref{SS:expparam}).

Recall that $\Theta$ and $\Xi$ are the natural and expectation parameterisations of the linear regression model, that their Fisher information matrices are $g_\Theta$ and $g_\Xi$ (respectively) and that the reparameterisation map $f_{\Xi\Theta} : \Theta \to \Xi$ between them is given by (\ref{E:fxt}).  Let $\pi_\Xi(\xi) = \sqrt{\det g_\Xi}$ be the volume density (i.e., the improper Jeffreys prior) on $\Xi$.  Then a standard result for exponential families \cite[Theorem 2.2.5]{kass_vos} is that $g_\Xi(\xi) = g_\Theta^{-1}(\fxt^{-1}(\xi))$ where $g_\Theta^{-1}$ is the matrix inverse of $g_\Theta$ and $\fxt^{-1}$
is the inverse function of $\fxt$.  Therefore $\pi_\Xi(\xi) = \left(\det g_\Theta^{-1}(\fxt^{-1}(\xi))\right)^{\frac{1}{2}} = \left(\pi_\Theta(\fxt^{-1}(\xi))\right)^{-1}$.
It is easy to show from (\ref{E:fxt}) that
$$\fxt^{-1}(\xi) = \frac{n}{\xi_{p+1} - \xi_1^2 - \ldots -\xi_p^2}\left(\xi_1, \ldots, \xi_p,-\frac{1}{2}\right)$$
so, from (\ref{E:JeffreysTheta}),
\begin{eqnarray*}
\pi_\Xi(\xi)
&=& n^{-\frac{1}{2}} 2^{\frac{p+1}{2}} \left(\frac{n}{2(\xi_{p+1} - \xi_1^2 - \ldots -\xi_p^2)} \right)^{\frac{p+2}{2}} \\
&=& n^{\frac{p+1}{2}} 2^{-\frac{1}{2}} (\xi_{p+1} - \xi_1^2 - \ldots -\xi_p^2)^{-\frac{p+2}{2}} \\
&=& \left( \frac{\gammafn\left(\frac{n-p}{2}\right)}{\gammafn\left( \frac{n}{2} \right)} \left(\frac{n}{2}\right)^{\frac{p}{2}} \right) r(\xi)   \mbox{ by Lemma \ref{L:PDFx}.}
\end{eqnarray*}
Up to a constant factor, the marginal probability $r(x)$ is therefore the hyperbolic volume density.  Furthermore, it is not hard to see that the factor is approximately $1$ when $p \ll n$ and $p$ and $n$ are even.

\appendix
\section{Proofs of technical lemmas}
\label{S:proofs}

We begin with a lemma which shows that Fisher information matrices behave well under reparameterisations, inclusions and submersions.  In particular, this will show that the Fisher information matrices determine a well-defined metric on the underlying stochastic manifold (though it is not hard to prove this fact directly by giving a coordinate-free definition of the metric).

Let $U$ and $V$ be parameter spaces (of arbitrary dimensions) for two stochastic models and let $\ell_U: U \to \R$ and $\ell_V: V \to \R$ be the corresponding log-likelihood functions.  If $f:U \to V$ is a function so that $\ell_U = \ell_V \circ f$ then we say that {\em $f$ maps $U$ into $V$ as a parameterised sub-model}.
\begin{lemma}
\label{L:pullback}
If $f:U \to V$ is a differentiable map which maps $U$ into $V$ as a parameterised sub-model then
$$g_U = J^T g_V J$$
where $g_U$ and $g_V$ are the Fisher information matrices of the two parameterisations and $J$ is the Jacobian matrix of $f$.  In other words, $g_U$ is the pull-back of $g_V$ via $f$.
\end{lemma}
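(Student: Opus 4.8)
The plan is to reduce everything to the chain rule together with the first of the two formulae for the Fisher information matrix in (\ref{E:defFI}). Fix $u \in U$ and set $v = f(u)$. The hypothesis $\ell_U = \ell_V \circ f$ says that, viewed as functions of the observed data, the log-likelihood at the parameter value $u$ coincides with the log-likelihood at $v$; in particular the two parameter values induce the same distribution on the data, so an expectation over the data distributed according to $u$ equals the corresponding expectation over the data distributed according to $v$. This identification of distributions is the only place the sub-model hypothesis enters, and it is precisely what turns the pointwise algebraic identity below into the assertion that $g_U$ is the pull-back of $g_V$.

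Next I would differentiate. Writing gradients as column matrices, the chain rule gives
\[
\grad |_u \ell_U = J(u)^T \, \grad |_{v} \ell_V ,
\]
where $J(u)$ is the Jacobian of $f$ at $u$ and $\grad |_v \ell_V$, evaluated at $v = f(u)$, is still a random vector (a function of the data). Forming the outer product,
\[
(\grad \ell_U)(\grad \ell_U)^T = J(u)^T (\grad \ell_V)(\grad \ell_V)^T J(u) .
\]
Since $J(u)$ is deterministic (it does not depend on the data) it pulls out of the expectation, and using the identification of the two data distributions from the previous paragraph,
\[
g_U(u) = \E\!\left[(\grad \ell_U)(\grad \ell_U)^T\right] = J(u)^T \, \E\!\left[(\grad \ell_V)(\grad \ell_V)^T\right] J(u) = J(u)^T \, g_V(f(u)) \, J(u) ,
\]
which is exactly the pull-back formula $g_U = J^T g_V J$. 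As a consistency check one can run the same argument through the Hessian formula $g_U = -\E[\mbox{Hess}(\ell)]$: the second-order chain rule gives $\mbox{Hess}(\ell_V \circ f) = J^T \mbox{Hess}(\ell_V) J + \sum_k (\partial_k \ell_V)\,\mbox{Hess}(f_k)$, and the extra terms vanish on taking expectations because the expected score $\E[\partial_k \ell_V]$ is zero, recovering the same identity.

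I do not expect a genuine obstacle: the computation is routine once the bookkeeping is fixed. The one point needing care — and the only substantive use of the hypothesis — is the observation that $\ell_U = \ell_V \circ f$ forces the data distributions at $u$ and at $f(u)$ to agree, so that the expectation defining $g_U(u)$ may legitimately be replaced by the one defining $g_V(f(u))$; differentiation under the integral sign and the vanishing of the expected score are covered by the blanket regularity assumptions stated after (\ref{E:defFI}).
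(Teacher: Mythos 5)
Your proof is correct and follows essentially the same route as the paper's: apply the chain rule to get $\grad \ell_U = J^T \grad \ell_V$, form the outer product, and pull the deterministic Jacobian out of the expectation. Your additional remarks --- that $\ell_U = \ell_V \circ f$ forces the data distributions at $u$ and $f(u)$ to coincide so the two expectations agree, and the consistency check via the Hessian formula --- are correct elaborations of points the paper leaves implicit.
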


\begin{proof}
By definition, $g_U = \E[(\nabla \ell_U)(\nabla \ell_U)^{T}]$ and $g_V = \E[(\nabla \ell_V)(\nabla \ell_V)^{T}]$.  By the chain rule, $\nabla \ell_U = J^T \nabla \ell_V$, where $\nabla \ell_U$ and $\nabla \ell_V$ are gradients of $\ell_U$ and $\ell_V$.  Therefore
$$ g_U = \E[(\nabla \ell_U)(\nabla \ell_U)^{T}] = \E[J^T (\nabla \ell_V)(\nabla \ell_V)^{T} J]
= J^T \E[(\nabla \ell_V)(\nabla \ell_V)^{T}] J = J^T g_V J,$$
as required.
\end{proof}

We now give proofs for some technical lemmas.

\begin{proof}[Proof of Lemma \ref{L:calX}]
Recall that $B B^T$ is the orthogonal projection onto $\col A$, so $1 - B B^T$ is the orthogonal projection onto the space perpendicular to $\col A$ and hence
$$\| y \|^2 = \| B B^T y \|^2 + \| (1 - B B^T) y \|^2$$
by Pythagoras' theorem.  Since the columns of $B$ form an orthonormal basis for $\col A$, $\| B B^T y \|^2 = \| B^T y \|^2_{\R^p}$, where the second norm is the Euclidean norm on $\R^p$ (and, as above, the norm without a subscript is the Euclidean norm on $\R^n$).  So substituting $\| B B^T y \|^2 = \| B^T y \|^2_{\R^p} = \| (x_1, \ldots, x_p) \|^2_{\R^p} = x_1^2 + \ldots + x_p^2$ and $\| y \|^2 = x_{p+1}$ into the above formula we obtain
\beq \label{E:parab}
x_{p+1} = x_1^2 + \ldots + x_p^2 + \| (1 - B B^T) y \|^2.
\eeq
Since $\| (1 - B B^T) y \|^2 \ge 0$, (\ref{E:parab}) implies $x_{p+1} \ge x_1^2 + \ldots + x_p^2$ and hence that the image of $T$ lies in $\mathcal{X}$.

On the other hand, if $p<n$ then there exists a non-zero vector $v$ perpendicular to $\col A$, so given any
$x \in \mathcal{X}$, if we define $y = B x_{[1:p]} + tv$ where $x_{[1:p]}$ is the $p \times 1$ column matrix with entries $x_1, \ldots, x_p$ and $t = \sqrt{x_{p+1} - x_1^2 - \ldots - x_p^2}$ then $T(y) = x$, so the image of $T$ also contains $\mathcal{X}$.  Here, $T(y) = x$ follows by using $B^T v = 0$ and $B^T B = I_p$ to show that $B^T y = B^T (B x_{[1:p]} + tv) = x_{[1:p]}$ so $\| y \|^2 = \| B B^T y \|^2 + \| (1 - B B^T) y \|^2 = \| B x_{[1:p]} \|^2 + \| (1 - B B^T) y \|^2 = \| x_{[1:p]} \|_{\R^p}^2 + \| (1 - B B^T) y \|^2 = x_1^2 + \ldots + x_p^2 + t^2 \| v \|^2$.
\end{proof}

\begin{proof}[Proof of Lemma \ref{L:PDFx}]
Let $x = T(y)$ be the sufficient statistic and let $x_{[1:p]}$ be the $p \times 1$ column matrix whose entries are the first $p$ sufficient statistics, so $x_{[1:p]} = B^T y$ by (\ref{E:suffstat}).  Then since $y$ given $\beta$ and $\sigma$ is normally distributed, so is $x_{[1:p]}$.  Also, the expected value of $x_{[1:p]}$ is $B^T \E[y] = B^T A \beta$ and the variance-covariance matrix of $x_{[1:p]}$ is
$$B^T \var(y) B = B^T (\sigma^2 I_n) B = \sigma^2 I_p.$$
So $x_{[1:p]} \sim N_p(B^TA\beta, \sigma^2 I_p)$ and the PDF of $x_{[1:p]}$ given $\theta$ is
\begin{equation}\label{E:jointPDF1top}
p(x_1, \ldots, x_p | \theta) = (2 \pi \sigma^2)^{-p/2} \exp\left( -\frac{\| x_{[1:p]} - B^T A \beta \|^2_{\R^p}}{2 \sigma^2} \right)
\end{equation}
where $\| \cdot \|^2_{\R^p}$ is the Euclidean norm on $\R^p$ (and recall that the norm $\| \cdot \|^2$ without a subscript is the Euclidean norm on $\R^n$).

Now, from (\ref{E:parab}) and an equation immediately preceding it, we have
$$x_{p+1} = x_1^2 + \ldots + x_p^2 + \| (1 - B B^T) y \|^2$$
and $x_1^2 + \ldots + x_p^2 = \| B B^T y \|^2$.  But $y$ is a normal random variable and $B B^T y$ and $(1 - B B^T) y$ are uncorrelated, hence they are independent and so are their norms $x_1^2 + \ldots + x_p^2$ and $\| (1 - B B^T) y \|^2$.  Therefore
$$x_{p+1} = x_1^2 + \ldots + x_p^2 + \sigma^2 Q$$
where $Q$ is a chi-squared random variable with $n-p$ degrees of freedom which is independent of $x_1, \ldots, x_p$.  So $x_{p+1}$ given $x_1, \ldots, x_p$ and $\theta$ is a deterministic linear function of $Q$, hence its PDF $p(x_{p+1} | x_1, \ldots, x_p, \theta)$ can be calculated from the PDF of $Q$
and the change of variables formula for PDFs
as
\beq
\frac{1}{\sigma^2 2^{\frac{n-p}{2}} \gammafn\left(\frac{n-p}{2}\right)}
\left( \frac{x_{p+1}-x_1^2 - \ldots - x_p^2}{\sigma^2}  \right)^{\frac{n-p-2}{2}}
\exp\left( -\frac{x_{p+1}-x_1^2 - \ldots - x_p^2}{2 \sigma^2} \right).
\label{E:jointPDF1last}
\eeq
Combining (\ref{E:jointPDF1top}) and (\ref{E:jointPDF1last}) then gives the PDF of $X$ given $\theta$:
\begin{eqnarray*}
p_X(x | \theta) &=& p(x_{p+1} | x_1, \ldots, x_p, \theta) \, p(x_1, \ldots, x_p | \theta)  \\
&=& \sigma^{-n}  \exp\left( \frac{ x_{p+1}-x_1^2 - \ldots - x_p^2 + \| x_{[1:p]} - B^T A \beta \|^2_{\R^p}}{-2 \sigma^2} \right) \\
&& \times \left( 2^{\frac{n}{2}} \pi^{p/2} \gammafn\left(\frac{n-p}{2}\right)\right)^{-1}
\left( x_{p+1}-x_1^2 - \ldots - x_p^2 \right)^{\frac{n-p-2}{2}} \\
&=& \sigma^{-n} \exp\left( \frac{ x_{p+1} -2 x_{[1:p]}\cdot B^T A \beta + \| B^T A \beta \|^2_{\R^p}}{-2 \sigma^2} \right) h_X(x) \\
&=& \exp(\theta \cdot x) h_X(x) / Z(\theta)
\end{eqnarray*}
by (\ref{E:suffstat}) and (\ref{E:logpartfn}).
\end{proof}

\begin{proof}[Proof of Lemma \ref{L:marginal}]
From (\ref{E:JeffreysTheta}) and Lemma \ref{L:PDFx},
$$ r(x) =  \sqrt{n} 2^{-\frac{p+1}{2}} h_X(x) \int_\Theta (-\theta_{p+1})^{-\frac{p+2}{2}} \exp(\theta \cdot x) \frac{1}{Z(\theta)} d\theta.$$
But from (\ref{E:logpartfn}),
\begin{eqnarray*}
\exp(\theta \cdot x)/Z(\theta)
&=& (-2 \theta_{p+1})^{\frac{n}{2}} e^{\theta_{p+1}x_{p+1}}  \exp\left(\theta_1 x_1+ \ldots + \theta_p x_p + \frac{\theta_1^2 + \ldots + \theta_p^2}{4 \theta_{p+1}}\right) \\
&=& (-2 \theta_{p+1})^{\frac{n}{2}} e^{\theta_{p+1}x_{p+1}}  \exp\left(\frac{1}{4 \theta_{p+1}} \sum_{i=1}^p \left[ 4 \theta_{p+1} \theta_i x_i  + \theta_i^2\right]\right) \\
&=& (-2 \theta_{p+1})^{\frac{n}{2}} e^{\theta_{p+1}x_{p+1}}  \exp\left(\frac{1}{4 \theta_{p+1}} \sum_{i=1}^p \left[
(\theta_i + 2 \theta_{p+1} x_i  )^2 - 4 \theta_{p+1}^2 x_i^2 \right]\right) \\
&=& (-2 \theta_{p+1})^{\frac{n}{2}} e^{\theta_{p+1}\left(x_{p+1} - x_1^2 - \ldots - x_p^2\right)}  \exp\left(\frac{1}{4 \theta_{p+1}} \sum_{i=1}^p (\theta_i + 2 \theta_{p+1} x_i )^2\right) \\
&=& (-2 \theta_{p+1})^{\frac{n}{2}} e^{\theta_{p+1}\left(x_{p+1} - x_1^2 - \ldots - x_p^2\right)}
\exp\left(\frac{1}{4 \theta_{p+1}}  \| \theta_{[1:p]} + 2 \theta_{p+1} x_{[1:p]} \|^2_{\R^p} \right) \\
&=& (-2 \theta_{p+1})^{\frac{n}{2}} e^{\theta_{p+1}\left(x_{p+1} - x_1^2 - \ldots - x_p^2\right)}
(-4\pi \theta_{p+1})^{\frac{p}{2}} f(\theta_{[1:p]}) \\
&=& 2^{\frac{n}{2} + p} \pi^{\frac{p}{2}} (-\theta_{p+1})^{\frac{n+p}{2}} e^{\theta_{p+1}\left(x_{p+1} - x_1^2 - \ldots - x_p^2\right)} f(\theta_{[1:p]})
\end{eqnarray*}
where $f(\theta_{[1:p]})$ is the PDF for a normal random variable $N_p(-2 \theta_{p+1} x_{[1:p]}, -2 \theta_{p+1} I_p)$ evaluated at $\theta_{[1:p]}$.  Therefore
\begin{eqnarray*}
r(x) &=&  \sqrt{n} 2^{\frac{n+p-1}{2}} \pi^{\frac{p}{2}} h_X(x) \int_\Theta (-\theta_{p+1})^{\frac{n-2}{2}}  e^{\theta_{p+1}\left(x_{p+1} - x_1^2 - \ldots - x_p^2\right)} f(\theta_{[1:p]}) d\theta \\
&=&  \sqrt{n} 2^{\frac{n+p-1}{2}} \pi^{\frac{p}{2}} h_X(x) \int_{-\infty}^0 (-\theta_{p+1})^{\frac{n-2}{2}}  e^{\theta_{p+1}\left(x_{p+1} - x_1^2 - \ldots - x_p^2\right)} d\theta_{p+1} \mbox{ by (\ref{E:Theta})}\\
&=&  \sqrt{n} 2^{\frac{n+p-1}{2}} \pi^{\frac{p}{2}} h_X(x) \int_0^\infty e^{-st}  t^{\frac{n-2}{2}}  dt
\end{eqnarray*}
where $t = -\theta_{p+1}$ and $s = x_{p+1} - x_1^2 - \ldots - x_p^2$.  But the Laplace transform of $t^{\frac{n-2}{2}}$ is $\gammafn\left( \frac{n}{2} \right) s^{-\frac{n}{2}}$,
so
\begin{eqnarray*}
r(x) &=&  \sqrt{n} 2^{\frac{n+p-1}{2}} \pi^{\frac{p}{2}} h_X(x) \gammafn\left( \frac{n}{2} \right) (x_{p+1} - x_1^2 - \ldots - x_p^2)^{-\frac{n}{2}}  \\
&=&  \sqrt{n} 2^{\frac{n+p-1}{2}} \pi^{\frac{p}{2}}  \left( 2^{\frac{n}{2}} \pi^{p/2}
\gammafn\left(\frac{n-p}{2}\right)\right)^{-1} \gammafn\left( \frac{n}{2} \right)
(x_{p+1} - x_1^2 - \ldots - x_p^2)^{\frac{-n + n-p-2}{2}} \\
&=& c_r (x_{p+1} - x_1^2 - \ldots - x_p^2)^{-\frac{p+2}{2}}.
\end{eqnarray*}
where we have again used Lemma \ref{L:PDFx}.
\end{proof}

\end{document}